\pgfplotsset{compat=newest}
\def\axisdefaultheight{110pt}
\DeclareMathOperator*{\argmin}{arg\,min}
\newcounter{proposition}
\newenvironment{proposition}[1][]{%
	\refstepcounter{proposition}%
	\par\noindent\textit{Proposition~\theproposition. #1} \itshape
}{\par\normalfont}
\newenvironment{proof}[1][Proof]{%
	\par\noindent\textit{#1. }\normalfont
}{%
	\hfill$\square$\par
}
\theoremstyle{remark}
\newtheorem{remark}{Remark}
\begin{document}
\begin{frontmatter}

\title{Second-Order MPC-Based Distributed Q-Learning\thanksref{footnoteinfo}} 

\thanks[footnoteinfo]{This paper is part of a project that has received funding from the European Research Council (ERC) under the European Union’s Horizon 2020 research and innovation programme (Grant agreement No. 101018826
	- CLariNet).}

\author[First]{Samuel Mallick} 
\author[First]{Filippo Airaldi}
\author[First]{Azita Dabiri} 
\author[First]{Bart De Schutter}

\address[First]{Delft Center for Systems and Control, Delft University of Technology, Delft, The Netherlands (e-mail: {s.h.mallick, f.airaldi, a.dabiri, b.deschutter}@tudelft.nl).}

\begin{abstract}                
The state of the art for model predictive control (MPC)-based distributed Q-learning is limited to first-order gradient updates of the MPC parameterization.
In general, using second-order information can significantly improve the speed of convergence for learning and allowing the use of higher learning rates without introducing instability.
This work presents a second-order extension to MPC-based Q-learning with updates distributed across local agents, relying only on locally available information and neighbor-to-neighbor communication.
In simulation the approach is demonstrated to significantly outperform first-order distributed Q-learning in terms of learning speed.
\end{abstract}

\begin{keyword}
	Reinforcement Learning, Distributed Model Predictive Control, Distributed Learning, Q-Learning
\end{keyword}

\end{frontmatter}

\section{Introduction}
In recent years the paradigm of model predictive control (MPC)-based reinforcement learning (RL) has gained popularity as a data-driven optimal control technique \citep{gros_data-driven_2020}.
In this context, an MPC scheme is used as a function approximator for the RL value functions and policy.
RL methods are then used to learn the parameterization of the MPC scheme using value-based methods, e.g., Q-learning, or policy-based methods, e.g., policy gradient, with the parameterization adjusted such that the MPC scheme accurately approximates the optimal policy or RL value functions.
This methodology can be viewed as the use of RL to tune MPC components (e.g., prediction model, cost, and constraints) from closed-loop data, in order to optimize closed-loop performance.
Conversely, the approach can be considered as a model-based RL method.
Whereas in traditional deep RL, neural networks (NNs) are used as function approximators \citep{arulkumaran2017deep}, the parameterized MPC scheme provides a more interpretable replacement that additionally facilitates the wealth of theoretical results for MPC, e.g., stability and feasibility \citep{rawlings_model_2017}, to be used in an RL context.
This paradigm has successfully been applied in several application domains including home energy management \citep{cai2023learning}, traffic control \citep{airaldi2025reinforcement}, microgrid control \citep{cai2023energy}, and greenhouse climate control \citep{mallick2025reinforcement}.

However, for large-scale systems with distributed controllers the approach must be adapted for a multi-agent setting.
In these cases, control decisions for each agent are computed based on local information and with neighbor-to-neighbor communication (N2N), i.e., local communication with neighboring agents.
Furthermore, privacy is often a concern, with sensitive information not allowed to be shared between agents or a centralized unit.
For MPC, this case is addressed by a distributed MPC methodology, in which distributed optimization algorithms are used to resolve a globally formulated MPC optimization problem \citep{conte_distributed_2016}.
Equivalently for RL, the class of multi-agent RL (MARL) algorithms addresses multi-agent control systems, wherein a key challenge is the non-stationarity induced by agents learning contemporaneously \citep{busoniu_comprehensive_2008}.
Recent work \citep{mallick2024multi} introduced a variant of MPC-based Q-learning for the multi-agent case with linear systems, where a structured distributed MPC scheme is proposed as a distributed function approximator in RL, with the parameterization updated via a fully distributed learning algorithm.
The approach avoids the non-stationarity issue, proving the local updates to be equivalent to a global update under some assumptions.
However, the proposed distributed learning approach is restricted to first-order updates.
In general, second-order updates offer benefits over their first-order counterparts, as the use of second-order information can significantly improve the speed of convergence and assist in escaping saddle points by exploiting curvature \citep{nocedal2006numerical}.
Furthermore, whereas first-order methods typically require very small step sizes to avoid instability, which can be caused by jumping too far for a given search direction, second-order methods have a natural Newton step of one, and therefore allow the use of higher learning rates without destabilizing the learning \citep{nocedal2006numerical}.

In light of this, the current work presents an extension to MPC-based distributed Q-learning allowing for second-order updates.
Leveraging consensus algorithms, it is shown that a global second-order update can be decomposed into local updates relying only on local information and N2N communication.
The resulting distributed learning scheme enjoys superior convergence speed and stability with respect to the first-order variant.

The remainder of this article is organized as follows.
Section \ref{sec:prelim_back} introduces the problem setting and provides the relevant background theory from \cite{mallick2024multi}.
In Section \ref{sec:sceond_order}, the proposed second-order extension is presented.
Section \ref{sec:results} demonstrates the approach in simulation, while Section \ref{sec:conclusions} concludes the article.

\section{Preliminaries and Background}\label{sec:prelim_back}
\subsection{Problem Setting}
We define a multi-agent Markov decision process for $M$ agents, indexed with $i \in \mathcal{M} = \{1, \dots, M\}$, as the tuple
\begin{equation}
	\big(\{\mathcal{S}_i\}_{i \in \mathcal{M}}, \{\mathcal{A}_i\}_{i \in \mathcal{M}}, P, \{{L}_i\}_{i \in \mathcal{M}}, \mathcal{G}\big).
\end{equation}
The sets $\mathcal{S}_i$ and $\mathcal{A}_i$ are the local state and action sets, respectively, for agent $i$.
Furthermore, ${L}_i$ is the local cost function, while $P$ describes the state transition dynamics for the whole system. 
The graph $\mathcal{G} = (\mathcal{M}, \mathcal{E})$ defines a coupling topology between agents in the network, where edges $\mathcal{E}$ are ordered pairs $(i, j)$ indicating that agent $i$ may affect the cost and state transition of agent $j$. 
Define the neighborhood of agent $i$ as $\mathcal{N}_i = \{j \in {M} | (j, i) \in \mathcal{E}, i \neq j\}$.
Note that an agent is not in its own neighborhood, i.e., $i \notin \mathcal{N}_i$.
We assume the graph $\mathcal{G}$ is connected.
Additionally, agents $i$ and $j$ can communicate if $i \in \mathcal{N}_j$ or $j \in \mathcal{N}_i$.
We consider agents to be linear dynamical systems with state $s_i \in \mathcal{S}_i \subseteq \mathbb{R}^n$ and control input $a_i \in \mathcal{A}_i \subseteq \mathbb{R}^m$.
However, the true dynamics $P$ of the network are assumed to be unknown.

We consider a cooperative RL setting where agents aim to minimize a common goal; however, information exchange is limited to agents' neighborhoods, and privacy is considered such that no sharing of local functions, e.g., cost functions or dynamics, is permitted.
At time step $t$, agent $i$ observes its own state $s_{i, t} \in \mathcal{S}_i$ and takes an action with a local policy parametrized by the local parameter vector $\theta_i \in \mathbb{R}^{n_{\theta_i}}$, $a_{i, t} = \pi_{i, \theta_i}(s_{i, t})$, observing the incurred local cost $L_{i}(s_{i, t}, a_{i, t})$ and the next state $s_{i, t+1}$.
We define the global parametrization as $\theta = [\theta_1^\top,...,\theta_M^\top]^\top \in \mathbb{R}^{n_\theta}$, with $n_\theta = \sum_{i\in \mathcal{M}} n_{\theta_i}$.
The joint policy, parametrized by $\theta$, is then $\pi_\theta(s) = \{\pi_{i, \theta_i}(s_i)\}_{i \in \mathcal{M}}$, with $a \in \mathcal{A} = \mathcal{A}_1 \times \dots \times \mathcal{A}_M$ the joint action $a = \{a_i\}_{i \in \mathcal{M}} = \pi_\theta(s)$, where $s \in \mathcal{S} = \mathcal{S}_1 \times \dots \times \mathcal{S}_M$ is the joint state $s = \{s_i\}_{i \in \mathcal{M}}$.
The cooperative goal is to minimize, by modifying the parameters $\theta_i$, the discounted cost over an infinite horizon 
\begin{equation}
	\label{eq:RL_goal}
	J(\pi_\theta) = \mathbb{E} \left[ \sum_{t=0}^{\infty} \gamma^t L\big(s_t, \pi_\theta(s_t)\big)\right],
\end{equation}
with $\gamma \in (0, 1]$ and
\begin{equation}
L(s_t, a_t) = \frac{1}{M}\sum_{i \in \mathcal{M}}L_{i}(s_{i, t}, a_{i, t})
\end{equation}
the average of the agents' local costs.
The expectation in \eqref{eq:RL_goal} is over the stochastic state trajectory, determined by the distribution of initial states $s_0$ and the potentially stochastic dynamics $P$.
\subsection{First-Order Distributed MPC-based Q-Learning}\label{sec:first_order}
For a given policy $\pi$, the familiar RL notion of the value function \citep{sutton_reinforcement_2018} is defined as
\begin{equation}
	V_{\pi}(s_t) = \mathbb{E}\bigg[\sum_{\tau = t}^\infty \gamma^{\tau - t}L\big(s_\tau, \pi(s_\tau)\big)\Big| s_{\tau|{\tau = t}} = s_t \bigg],
\end{equation}
where the expectation is again over the stochastic state trajectory starting from $s_t$, and the action-value function as
\begin{equation}
	Q_{\pi}(s_t, a_t) = L(s_t, a_t) + V_{\pi}(s_{t+1}).
\end{equation}
Value-based RL methods in general attempt to learn an approximation $Q_\theta$ of the optimal action-value function 
\begin{equation}
	Q_\theta(s, a) \approx Q^\star(s, a) = \min_{\pi}Q_{\pi}(s, a), \: \forall (s, a) \in \mathcal{S} \times \mathcal{A}
\end{equation}
and then infer the policy as 
\begin{equation}
	\pi_\theta(s) = \argmin_a Q_\theta(s, a).
\end{equation}
While in deep RL methods an NN is used as parametric function approximator for $Q_\theta$, in MPC-based RL the NN is replaced by an MPC scheme.

In \cite{mallick2024multi}, this idea was extended to the multi-agent case, with the use of a structured convex distributed MPC scheme, parameterized in $\theta$, proposed as distributed function approximator.
Consider the action-value function approximation
\begin{subequations}\label{eq:Q}
	\begin{align}
		Q_\theta(s, a) = &\min_{\substack{(\{\bm{x}_i\}, \{\bm{u}_i\})_{i \in \mathcal{M}}}} \sum_{i \in \mathcal{M}} F_{\theta_i}\big(\bm{x}_i, \{\bm{x}_j\}_{j \in \mathcal{N}_i}, \bm{u}_i\big) \\
		\text{s.t.}& \quad \forall  i \in \mathcal{M}: \nonumber \\
		&u_i(0) = a_i, \: x_i(0) = s_i \\
		&H_{\theta_i}\big(\bm{x}_i, \{\bm{x}_j\}_{j \in \mathcal{N}_i}, \bm{u}_i\big) \leq 0 \label{eq:ineq}\\
		&x_i(k+1)= A_{i, \theta_i} x_i(k) + \sum_{j \in \mathcal{N}_i} A_{ij, \theta_i} x_j(k)  \nonumber\\
		 &\quad + B_{i, \theta_i} u_i(k) + b_{\theta_i}, \: k = 0,\dots,N-1. \label{eq:dynamics}
	\end{align}
\end{subequations}
The convex functions $F_{\theta_i}$ are local costs, taking as arguments the state trajectories of an agent and its neighbors, and the action trajectory of the given agent, over a prediction horizon of $N$ steps,
\begin{equation}
	\begin{aligned}
		\bm{x}_i &= \big[x_i^\top(0), \dots, x_i^\top(N)\big], \\
		\bm{u}_i &= \big[u_i^\top(0), \dots, u_i^\top(N-1)\big].
	\end{aligned}
\end{equation}
Constraint \eqref{eq:dynamics} is an approximation of the agents' true dynamics $P$, while constraint \eqref{eq:ineq} contains general coupled linear inequality constraints.
Through learning, the parameterization of the cost, model, and constraints can be modified to improve the global closed-loop performance.
Note that slack variables, penalized in the cost, can soften the inequality constraints such that feasibility is maintained while learning $\theta$; however, for simplicity they are omitted.
See \cite{mallick2024multi, gros_data-driven_2020} for more details. 

The value function and policy are then constructed from the same MPC scheme as
\begin{equation}
	\label{eq:policy}
	\begin{aligned}
		V_\theta(s) = &\argmin_{\{u_i(0)\}_{i \in \mathcal{M}}} \: \sum_{i \in \mathcal{M}} F_{\theta_i}\big(\bm{x}_i, \{\bm{x}_j\}_{j \in \mathcal{N}_i}, \bm{u}_i\big)\\
		\text{s.t.}& \quad \forall  i \in \mathcal{M}: \\ 
		&x_i(0) = s_i, \:\text{\eqref{eq:ineq}, \eqref{eq:dynamics}},
	\end{aligned}
\end{equation}
and $\pi_\theta(s) = \{u_i^\star(0)\}_{i \in \mathcal{M}}$, satisfying the fundamental Bellman equations \citep{sutton_reinforcement_2018}.
For clarity in the following, all equality constraints in \eqref{eq:Q} are grouped in 
\begin{equation}
	G_{\theta_i}\big(s_i, a_i, \bm{x}_i, \{\bm{x}_j\}_{j \in \mathcal{N}_i}, \bm{u}_i\big) = 0.
\end{equation}

The parameterization $\theta$ of the distributed MPC scheme can then be learned with the value-based RL method, Q-learning.
With the TD error signal at time step $t$
\begin{equation}
	\delta_t = L(s_t, a_t) + \gamma V_\theta(s_{t+1}) - Q_\theta(s_t, a_t),
\end{equation}
First-order Q-learning updates $\theta$ as
\begin{equation} \label{eq:fo_update}
	\theta \gets \theta + \alpha 
	\frac{1}{T}\sum_{\tau \in \mathcal{T}} \delta_\tau \nabla_\theta Q_\theta(s_\tau, a_\tau),
\end{equation}
for the TD error $\delta_\tau$ and the gradient $\nabla_\theta Q_\theta(s_\tau, a_\tau)$ at the time steps $\mathcal{T} = \{t_1, \dots, t_T\}$, randomly sampled from a replay buffer of stored transitions.
For $T = 1$, \eqref{eq:fo_update} reduces to classical recursive Q-learning, where the most recent observed data is used to update $\theta$ and then discarded.
In general, $T > 1$ provides a better sample average of the true gradient and is desired for stability of the learning \citep{lin1992self}.

It is shown in \cite{mallick2024multi} that, via the alternating direction method of multipliers (ADMM) \citep{boyd_distributed_2010} and global average consensus (GAC) \citep{olfati-saber_consensus_2007}, the policy and value functions can be evaluated distributively by agents using only local information and N2N communication, i.e., $\pi_{i, \theta_i}(s_i) = u^\star_i(0)$, $V_\theta(s)$, $Q_\theta(s, a)$, and $\delta$ are made available locally. 
Via sensitivity analysis \citep{buskens_sensitivity_2001}, the gradient of the Q-function coincides with that of the Lagrangian of \eqref{eq:Q}, $\mathcal{L}_{\theta}$, evaluated at the optimal primal and dual variables for \eqref{eq:Q}, stacked in the vector $p^\star$.
That is,
\begin{equation}
	\label{eq:Q_sens}
	\nabla_\theta Q_\theta(s, a) = \frac{\partial \mathcal{L}_\theta(s, a, p^\star)}{\partial \theta}.
\end{equation}
The Lagrangian, due to the structure of \eqref{eq:Q}, is the sum of local Lagrangians
\begin{equation}
	\begin{aligned}
		\mathcal{L}_\theta(s, a, p) &= \sum_{i \in \mathcal{M}} \mathcal{L}_{\theta_i}(s_i, a_i, p_i)\\ 
		&= \sum_{i \in \mathcal{M}} F_{\theta_i}\big(\bm{x}_i, \{\bm{x}_j\}_{j \in \mathcal{N}_i}, \bm{u}_i\big) \\
		&\quad+ \bm{\lambda}_i^\top G_{\theta_i}\big(s_i, a_i, \bm{x}_i, \{\bm{x}_j\}_{j \in \mathcal{N}_i}, \bm{u}_i\big) \\
		&\quad+ \bm{\mu}_i^\top H_{\theta_i}\big(\bm{x}_i, \{\bm{x}_j\}_{j \in \mathcal{N}_i}, \bm{u}_i\big),
	\end{aligned}
\end{equation}
where $p_i$ is the locally relevant set of primal and dual variables
\begin{equation}
	p_i = \big(\bm{x}_i, \bm{u}_i, \bm{\sigma}_i, \{\bm{x}_j\}_{j \in \mathcal{N}_i}, \bm{\lambda}_i, \bm{\mu}_i\big).
\end{equation} 
In \cite{mallick2024multi}, it is shown that the optimal set of primal and duals $p_i^\star$ is available locally following the ADMM procedure, and that the global update \eqref{eq:fo_update} decomposes into the local updates
\begin{equation}
	\label{eq:fo_local}
	\theta_i \gets \theta_i + \alpha \frac{1}{T} \sum_{\tau \in \mathcal{T}}\delta_\tau \frac{\partial \mathcal{L}_{\theta_i}(s_{i, \tau}, a_{i, \tau}, p_i^\star)}{\partial \theta_i},\: i \in \mathcal{M},
\end{equation}
as $\delta_\tau$ has been made available locally via consensus.

The result is a fully distributed MPC-RL scheme where agents utilize only local information and N2N communication during both training and deployment.
The updates \eqref{eq:fo_local} are, however, first order.
In Section \ref{sec:sceond_order} we extend the methodology to allow distributed second-order updates.

\subsection{Global Average Consensus}
Here we briefly introduce the GAC algorithm, as it is used explicitly in the sequel.
The GAC algorithm allows a network of agents to agree on the sum\footnote{Usually GAC is formulated for agreement on the average of local variables; however, provided the number of agents $M$ is known locally, the sum follows trivially.} of local variables $v_{i} \in \mathbb{R}, i \in \mathcal{M}$, communicating over the graph $\mathcal{G}$.
For each agent, the algorithm updates values as 
\begin{equation}\label{eq:cons_iter}
	v_i^{(\tau+1)} = [\bm{P}]_{i,i} v_i^{(\tau)} + \sum_{j \in \mathcal{N}_i} [\bm{P}]_{i,j} v_j^{(\tau)},
\end{equation}
for $v_i^{(0)} = M v_i$, where $\bm{P} \in \mathbb{R}_+^{M \times M}$ is a doubly stochastic matrix, i.e., entries in each row and column sum to 1, where $[\bm{P}]_{i, j}$ is the entry of $\bm{P}$ in the $i$th row and $j$th column.
The iterates converge as \citep{olfati-saber_consensus_2007}
\begin{equation}
	\lim_{\tau \to \infty} v_i^{\tau} = \frac{1}{M}\sum_{j \in \mathcal{M}} v_j^{(0)} = \sum_{j \in \mathcal{M}} v_{j},\: j \in \mathcal{M}.
\end{equation}

\section{Second-Order Learning}\label{sec:sceond_order}
For Q-learning, a global second-order update of $\theta$ is
\begin{equation}\label{eq:global_update}
	\theta \gets \theta - \alpha \bm{d},
\end{equation}
 where $\bm{d}$ is the solution to 
\begin{equation}\label{eq:Hug}
	(\bm{H} + \bm{\Lambda})\bm{d} = \bm{q}
\end{equation}
with \citep{airaldi2025reinforcement}
\begin{equation}
	\small
	\begin{aligned}
		\bm{q} &= -\frac{1}{T}\sum_{\tau \in \mathcal{T}} \delta_\tau \nabla_\theta Q_\theta(s_\tau, a_\tau) \\
		\bm{H} &= \frac{1}{T}\sum_{\tau \in \mathcal{T}} \bigg( \nabla_\theta Q_\theta(s_\tau, a_\tau) \nabla_\theta Q_\theta^\top(s_\tau, a_\tau) - \delta_\tau \nabla_\theta^2Q_\theta(s_\tau, a_\tau) \bigg),
	\end{aligned}
\end{equation} 
where $\bm{\Lambda}$ is added as a regularization, providing a well-behaved update, e.g., rendering $\bm{H} + \bm{\Lambda}$ non-singular or positive definite.
The form and computation of $\bm{\Lambda}$ is discussed in what follows.
As it stands, note that this solution does not trivially separate over agents' local information due to the cross-coupling in $\bm{H}$.
In the following, we show how leveraging consensus on a specific set of local variables allows a second-order update to decouple.

\subsection{Recursive Update}
For simplicity, and to give intuition, we first elaborate the recursive case $T = 1$ and where \eqref{eq:Q} contains only a first-order parameterization, i.e., $\nabla_\theta^2Q_\theta(s_t, a_t) = \bm{0}$. 
Define the vector of all first-order information at time step $t$, $\bm{g}_t \in \mathbb{R}^{n_\theta}$, as
\begin{equation}
	\begin{aligned}
		\bm{g}_t &= \nabla_\theta Q_\theta(s_t, a_t)
		= \begin{bmatrix} \bm{g}_{1, t}^\top & \dots & \bm{g}_{M, t}^\top
		\end{bmatrix}^\top,
	\end{aligned}
\end{equation}
where $\bm{g}_{i, t} = \partial \mathcal{L}_{\theta_i}(s_{i, t}, a_{i, t}, p_{i, t}^\star)/\partial \theta_i \in \mathbb{R}^{n_{\theta_i}}$.
Thus, we are interested in finding a solution to \eqref{eq:Hug} in the form 
\begin{equation}\label{eq:recursive_update}
	(\bm{g}_t \bm{g}_t^\top + \bm{\Lambda})\bm{d} = -\delta_t \bm{g}_t.
\end{equation} 
Observe that $\bm{g}_t \bm{g}_t^\top$ is a rank-1 matrix update of $\bm{\Lambda}$.
As such, assuming $\bm{g}_t \bm{g}_t^\top + \bm{\Lambda}$ is non-singular, the inverse can be computed using the Sherman-Morrison formula \cite{horn_matrix_2012} as 
\begin{equation}
	\label{eq:Sherman_Morrison}
	\begin{aligned}
		(\bm{g}_t \bm{g}_t^\top + \bm{\Lambda})^{-1} &= \bm{\Lambda}^{-1} - \frac{(\bm{\Lambda})^{-1}\bm{g}_t \bm{g}_t^\top (\bm{\Lambda})^{-1}}{1 + \bm{g}_t^\top (\bm{\Lambda})^{-1} \bm{g}_t} 
	\end{aligned}
\end{equation}
Note that, as $\bm{g}_t \bm{g}_t^\top \succeq 0$, we have that $(\bm{g}_t \bm{g}_t^\top + \sigma \bm{I})^{-1}$ exists for any scalar $\sigma > 0$.
Hence, $\bm{\Lambda}$ can be fixed as $\sigma \bm{I}$ \textit{a priori}, and the update \eqref{eq:recursive_update} can be expressed as 
\begin{equation}
	\begin{aligned}
		\bm{d} &= -\Big(\frac{1}{\sigma}\bm{I} - \frac{\bm{g}_t \bm{g}_t^\top}{\sigma^2 + \sigma \|\bm{g}_t\|^2}\Big) \delta_t \bm{g}_t =\frac{-\delta_t}{\sigma + \|\bm{g}_t\|^2} \bm{g}_t.
	\end{aligned}
\end{equation}
Therefore, the global solution can be decomposed into local solutions $\bm{d} = [\bm{d}_1^\top, \dots, \bm{d}_M^\top]^\top$ with 
\begin{equation}
	\label{eq:local_recursive_update}
	\bm{d}_i = \frac{-\delta_t}{\sigma + \|\bm{g}_t\|^2} \bm{g}_{i, t},
\end{equation}
with the local parameter update therefore $\theta_i \gets \theta_i - \alpha \bm{d}_i$.
As $\|\bm{g}_t\|^2 = \sum_{i \in \mathcal{M}} \|\bm{g}_{i, t}\|^2$ is a sum of locally known values, this scalar value can be made available locally, alongside the TD error $\delta_t$, via the GAC algorithm.

\subsection{Full Second-Order Update}
We now provide a result that shows how, leveraging consensus on additional variables, a decoupled update is possible for $T > 1$ and $\nabla_\theta^2Q_\theta(s_t, a_t) \neq \bm{0}$.
Given the structure of \eqref{eq:Q}, observe that
\begin{equation}
	\frac{\partial \mathcal{L}_{\theta_i}}{\partial \theta_j \partial \theta_i} = \bm{0}, \: \forall i, j \: , \: i \neq j.
\end{equation}
Thus, $\nabla_\theta^2Q_\theta(s_t, a_t)$ is block diagonal.
Define $\bm{K} \in \mathbb{R}^{n_\theta \times n_\theta}$ as
\begin{equation}
	\bm{K} = \sum_{\tau \in \mathcal{T}}\delta_\tau \nabla_\theta^2Q_\theta(s_\tau, a_\tau) = \begin{bmatrix}
		\bm{K}_{1} & \dots & \bm{0} \\
		\vdots & \ddots & \vdots \\
		\bm{0} & \dots & \bm{K}_{M}
	\end{bmatrix},
\end{equation}
where $\bm{K}_{i}  = \sum_{\tau \in \mathcal{T}}\delta_\tau \partial^2 \mathcal{L}_{\theta_i}(s_{i, \tau}, a_{i, \tau}, p_{i, \tau}^\star) / \partial \theta_i^2\in \mathbb{R}^{n_{\theta_i} \times n_{\theta_i}}$.
We are now interested in a solution to \eqref{eq:Hug} in the form 
\begin{equation}\label{eq:replay_update}
	\Bigg(\sum_{\tau \in \mathcal{T}} \bm{g}_\tau \bm{g}_\tau^\top - \bm{K} + T \bm{\Lambda}\Bigg) \bm{d} = -\sum_{t \in \mathcal{T}}\delta_\tau \bm{g}_\tau.
\end{equation}
Define the matrices
\begin{equation}
	\begin{aligned}
		\tilde{\bm{K}}_{i} &= (T \sigma_i \bm{I} - \bm{K}_i)^{-1} \in \mathbb{R}^{n_{\theta_i} \times n_{\theta_i}}, \\
		\tilde{\bm{K}} &= \begin{bmatrix}
			\tilde{\bm{K}}_{1} & \dots & \bm{0} \\
			\vdots & \ddots & \vdots \\
			\bm{0} & \dots & \tilde{\bm{K}}_{M}
		\end{bmatrix} \in \mathbb{R}^{n_\theta \times n_\theta}, \\
		 \bm{C} &= \begin{bmatrix}
			\bm{g}_{t_1}^\top \tilde{\bm{K}} \bm{g}_{t_1} & \bm{g}_{t_1}^\top \tilde{\bm{K}} \bm{g}_{t_2} & \dots & \bm{g}_{t_1}^\top \tilde{\bm{K}} \bm{g}_{t_T} \\
			\bm{g}_{t_2}^\top \tilde{\bm{K}} \bm{g}_{t_1} & \bm{g}_{t_2}^\top \tilde{\bm{K}} \bm{g}_{t_2} & \dots & \bm{g}_{t_2}^\top \tilde{\bm{K}} \bm{g}_{t_T} \\
			\vdots & \vdots & \ddots & \vdots \\
			\bm{g}_{t_T}^\top \tilde{\bm{K}} \bm{g}_{t_1} & \bm{g}_{t_T}^\top \tilde{\bm{K}} \bm{g}_{t_2} & \dots & \bm{g}_{t_T}^\top \tilde{\bm{K}} \bm{g}_{t_T}
		\end{bmatrix} \in \mathbb{R}^{T \times T}, \\
		\bm{\delta} &= [\delta_{t_1}, \dots,  \delta_{t_T}]^\top \in \mathbb{R}^T, \\
		\bm{\Lambda} &= \begin{bmatrix}
			\sigma_{1} \bm{I} & \dots & \bm{0} \\
			\vdots & \ddots & \vdots \\
			\bm{0} & \dots & \sigma_{M} \bm{I}
		\end{bmatrix} \in \mathbb{R}^{n_\theta \times n_\theta},
	\end{aligned}
\end{equation}
The terms $\sigma_i > 0$ in $\bm{\Lambda}$ are chosen as local regularization terms.
Recall, from Section \ref{sec:first_order}, that at each time step $\delta_t$ is made available locally, such that each agent can construct $\bm{\delta}$.
Furthermore, observe that, as $\tilde{\bm{K}}$ is block diagonal, each element of $\bm{C}$ can be expressed as
\begin{equation}
	[\bm{C}]_{\tau, \tau^\prime} = \sum_{i \in \mathcal{M}} \bm{g}_{i, \tau}^\top \tilde{\bm{K}}_i \bm{g}_{i, \tau^\prime}.
\end{equation}
As a sum of locally known values, these entries can be made available to all agents via the GAC algorithm, such that $\bm{C}$ is available locally.
Additionally, as $\bm{C}$ is symmetric, consensus on only $T(T+1)/2$ scalar values is required.
Finally, define $\bm{G}_i \in \mathbb{R}^{n_{\theta_i} \times T}$ as 
\begin{equation}
	\bm{G}_i = \begin{bmatrix}
		\bm{g}_{i, t_1}, \dots, \bm{g}_{i, t_T}
	\end{bmatrix}.
\end{equation}
We now give a result providing a solution to \eqref{eq:replay_update} that separates across agents.
\begin{proposition}\label{prop:1}
	Assume that $\sigma_i$, $i = 1,\dots,M$, are chosen such that the following inverses exist:
	\begin{equation}
		\bigg(\sum_{\tau \in \mathcal{T}} \bm{g}_\tau \bm{g}_\tau^\top - \bm{K} + T \bm{\Lambda}\bigg)^{-1}, (\bm{I} + \bm{C})^{-1}, (T \sigma_i \bm{I} - \bm{K}_i)^{-1}.
	\end{equation}
	The solution to \eqref{eq:replay_update} is $\bm{d} = \big[\bm{d}_1^\top, \dots, \bm{d}_M^\top\big]^\top$ with
	\begin{equation}
		\label{eq:local_so_update}
		\bm{d}_i = -\tilde{\bm{K}}_i \bm{G}_i \Big(\bm{\delta} - (\bm{I} + \bm{C})^{-1}\bm{C}\bm{\delta}\Big).
	\end{equation} 
\end{proposition}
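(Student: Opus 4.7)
The plan is to recognize the left-hand side of \eqref{eq:replay_update} as a low-rank update to a block-diagonal matrix and then apply the Woodbury identity. Define the stacked gradient matrix $\bm{G} = [\bm{g}_{t_1}, \dots, \bm{g}_{t_T}] \in \mathbb{R}^{n_\theta \times T}$, so that the sum of outer products becomes $\sum_{\tau \in \mathcal{T}} \bm{g}_\tau \bm{g}_\tau^\top = \bm{G}\bm{G}^\top$ and the right-hand side becomes $-\sum_{\tau \in \mathcal{T}} \delta_\tau \bm{g}_\tau = -\bm{G}\bm{\delta}$. The key observation is that because $\bm{K}$ and $\bm{\Lambda}$ are both block diagonal with compatible block structure, the matrix $T\bm{\Lambda} - \bm{K}$ is itself block diagonal with diagonal blocks exactly $T\sigma_i \bm{I} - \bm{K}_i = \tilde{\bm{K}}_i^{-1}$, i.e.\ $T\bm{\Lambda} - \bm{K} = \tilde{\bm{K}}^{-1}$. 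The system \eqref{eq:replay_update} therefore reads
\begin{equation*}
    \big(\tilde{\bm{K}}^{-1} + \bm{G}\bm{G}^\top\big)\bm{d} = -\bm{G}\bm{\delta}.
\end{equation*}

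Next, I would invoke the Woodbury matrix identity, which under the stated invertibility assumptions gives
\begin{equation*}
    \big(\tilde{\bm{K}}^{-1} + \bm{G}\bm{G}^\top\big)^{-1} = \tilde{\bm{K}} - \tilde{\bm{K}}\bm{G}\big(\bm{I} + \bm{G}^\top\tilde{\bm{K}}\bm{G}\big)^{-1}\bm{G}^\top\tilde{\bm{K}}.
\end{equation*}
Recognising that $\bm{G}^\top\tilde{\bm{K}}\bm{G} = \bm{C}$ by the definition of $\bm{C}$, multiplying by $-\bm{G}\bm{\delta}$, and collecting the common factor $\tilde{\bm{K}}\bm{G}$ yields
\begin{equation*}
    \bm{d} = -\tilde{\bm{K}}\bm{G}\bm{\delta} + \tilde{\bm{K}}\bm{G}(\bm{I}+\bm{C})^{-1}\bm{C}\bm{\delta} = -\tilde{\bm{K}}\bm{G}\big(\bm{\delta} - (\bm{I}+\bm{C})^{-1}\bm{C}\bm{\delta}\big).
\end{equation*}

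Finally, I would exploit the block-diagonal structure of $\tilde{\bm{K}}$ and the block-row partition $\bm{G} = [\bm{G}_1^\top, \dots, \bm{G}_M^\top]^\top$ induced by the partition of $\theta$ into local parameters $\theta_i$. The product $\tilde{\bm{K}}\bm{G}$ then stacks the local matrices $\tilde{\bm{K}}_i\bm{G}_i$, so extracting the $i$-th block of $\bm{d}$ gives exactly \eqref{eq:local_so_update}. There is no real obstacle; the only subtlety is to notice that $T\bm{\Lambda} - \bm{K}$ coincides with $\tilde{\bm{K}}^{-1}$, which converts the expression into precisely the form for which the Woodbury identity produces $\bm{C}$ and $\bm{I} + \bm{C}$ as its Schur-complement factors.
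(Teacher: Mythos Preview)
Your proposal is correct and essentially identical to the paper's proof: the paper also introduces the stacked gradient matrix (called $\bm{U}$ there), identifies $T\bm{\Lambda}-\bm{K}$ as the block-diagonal matrix $\bm{A}$ with $\bm{A}^{-1}=\tilde{\bm{K}}$, applies the Woodbury identity, recognizes $\bm{U}^\top\tilde{\bm{K}}\bm{U}=\bm{C}$, and finishes by reading off the $i$-th block from the block-diagonal product $\tilde{\bm{K}}\bm{U}$. The only cosmetic difference is notation.
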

\begin{proof}
	Define the matrices $\bm{A} \in \mathbb{R}^{n_\theta \times n_\theta} $ and $\bm{U} \in \mathbb{R}^{n_\theta \times T}$ as
	\begin{equation}
		A = T \bm{\Lambda} - \bm{K}, \quad \bm{U} = \begin{bmatrix}
			\bm{g}_{t_1} & \dots & \bm{g}_{t_T}
		\end{bmatrix},
	\end{equation}
	such that \eqref{eq:replay_update} can be written
	\begin{equation}
		(\bm{U} \bm{U}^\top + \bm{A})\bm{d} = -\bm{U} \bm{\delta}.
	\end{equation}
	As $\bm{K}$ and $\bm{\Lambda}$ are block diagonal, $\bm{A}$ is also block diagonal, and the inverse is formed from the inverse of each block, such that $\bm{A}^{-1} = \tilde{\bm{K}}$.
	This inverse exists as $\tilde{\bm{K}}_i$ for $i \in \mathcal{M}$ is non-singular by assumption.
	Observe that $\bm{U} \bm{U}^\top$ is (at most) a rank $\min(n_\theta, T)$ matrix update of $\bm{A}$, such that, leveraging the Woodbury matrix identity \citep{horn_matrix_2012}, the inverse can be written as
	\begin{equation}
		(\bm{U} \bm{U}^\top + \bm{A})^{-1} = \bm{A}^{-1} - \bm{A}^{-1}\bm{U}(\bm{I} + \bm{U}^\top \bm{A}^{-1} \bm{U})^{-1}\bm{U}^\top \bm{A}^{-1}.
	\end{equation}
	The solution to \eqref{eq:replay_update} can then be expressed
	\begin{equation}
		\bm{d} = -\tilde{\bm{K}}\bm{U}\bm{\delta} + \tilde{\bm{K}}\bm{U}(\bm{I} + \bm{U}^\top \tilde{\bm{K}}\bm{U})^{-1} \bm{U}^\top \tilde{\bm{K}} \bm{U} \bm{\delta}.
	\end{equation} 
	Observing that $\bm{U}^\top \tilde{\bm{K}}\bm{U} = \bm{C}$ by definition, this can be written
	\begin{equation}\label{eq:ppp}
		\bm{d} = -\tilde{\bm{K}}\bm{U}\big(\bm{\delta} - (\bm{I} + \bm{C})^{-1}\bm{C}\bm{\delta}\big),
	\end{equation}
	where the remaining inverse exists by assumption.
	Finally, as $\bm{U}$ can equivalently be written
	\begin{equation}
		\bm{U} = \begin{bmatrix}
			\bm{G}_1^\top & \dots & \bm{G}_M^\top
		\end{bmatrix}^\top
	\end{equation}
	and $\tilde{\bm{K}}$ is block diagonal, the block $\tilde{\bm{K}}_i$ multiplies the relevant component $\bm{G}_i$, and the result follows.
\end{proof}
The distributed second order update is then
\begin{equation}\label{eq:so_local}
	\theta_i \gets \theta_i + \alpha \bigg(\tilde{\bm{K}}_i \bm{G}_i \Big(\bm{\delta} - (\bm{I} + \bm{C})^{-1}\bm{C}\bm{\delta}\Big)\bigg).
\end{equation}

For $T=1$, $\nabla_\theta^2 Q_\theta = \bm{0}$, and $\bm{\Lambda} = \sigma \bm{I}$ we have $\tilde{\bm{K}}_i = \frac{1}{\sigma} \bm{I}$, $\bm{G}_i = \bm{g}_{i, t_1}$, $\bm{\delta} = \delta_{t_1}$, and $\bm{C} = \frac{1}{\sigma}\|\bm{g}_{t_1}\|^2$.
The update \eqref{eq:local_so_update} then reduces to
\begin{equation}
	\begin{aligned}
		\bm{d}_i &= -\frac{1}{\sigma}\bm{g}_{i, t_1}\bigg(\delta_{t_1} - \Big(1 + \frac{1}{\sigma}\|\bm{g}_{t_1}\|^2\Big)^{-1}\frac{1}{\sigma}\|\bm{g}_{t_1}\|^2 \delta_{t_1}\bigg) \\
		&=\frac{-\delta_{t_1}}{\sigma + \|\bm{g}_{t_1}\|^2}\bm{g}_{i, t_1},
	\end{aligned}
\end{equation}
and the recursive update \eqref{eq:local_recursive_update} is recovered.
\begin{remark}
	The existence of the inverses assumed for Proposition \ref{prop:1} can be guaranteed using only local information by selecting $\sigma_i$ such that $T \sigma_i \bm{I} - \bm{K}_i \succ 0$.
	Then, as this choice renders $\tilde{\bm{K}}$ positive definite, we have that $\bm{C} \succeq 0$, and $\bm{I}+\bm{C}$ is positive definite and thus invertible.
	Finally, with positive definite diagonal blocks, $\bm{A} \succ 0$, and, as $\bm{U}\bm{U}^\top \succeq 0$, the matrix $\bm{U}\bm{U}^\top + \bm{A}$, or equivalently $\sum_{\tau \in \mathcal{T}} \bm{g}_\tau \bm{g}_\tau^\top - \bm{K} + T \bm{\Lambda}$, is positive definite and invertible.
	This then corresponds to a regularization for positive definiteness of the Hessian, which is a common choice for second-order methods.
	In the case that regularization only for non-singularity is desired, note that $\sigma_i$ can be chosen locally such that $T \sigma_i \bm{I} - \bm{K}_i$ is non-singular.
	Then $\bm{I} + \bm{C}$ becomes available locally following consensus, and can be checked for invertibility, such that the local update \eqref{eq:local_so_update} can be computed.
	In the case that this choice results in singular $\sum_{\tau \in \mathcal{T}} \bm{g}_\tau \bm{g}_\tau^\top - \bm{K} + T \bm{\Lambda}$, the interpretation is the distributed update \eqref{eq:local_so_update} being computable but not equivalent to a centralized update, which is not well-defined.
\end{remark}
\begin{remark}
	The local update \eqref{eq:local_so_update} requires an agent to compute the numerical inverses of the two symmetric matrices $T \sigma_i \bm{I} - \bm{K}_i$ and $\bm{I} + \bm{C}$, of size $n_{\theta_i}\times n_{\theta_i}$ and $T \times T$, respectively.
	Hence, the computational burden increases as the parameterization becomes more complex and the replay buffer larger, but is independent of the network size $M$.
	In contrast, a centralized update requires the inverse of the full Hessian of size $\sum_{i \in \mathcal{M}} n_{\theta_i} \times \sum_{i \in \mathcal{M}} n_{\theta_i}$, which is independent of $T$ but increases with $M$.
\end{remark}
\begin{remark}
	The first-order update \eqref{eq:fo_local} includes consensus on the three scalars $V_\theta(s)$, $Q_\theta(s, a)$, and $L(s, a)$ at each time step, for agents to have local knowledge of $\delta$.
	The second-order update \eqref{eq:local_so_update} additionally introduces consensus on $T(T+1)/2$ scalars at each parameter for local knowledge of $\bm{C}$.
	All values can be stacked in a vector and GAC applied for agreement on the vector, such that the number of iterations of \eqref{eq:cons_iter} is constant.
	The amount of data communicated between agents scales as $\mathcal{O}(T^2)$, but is, however, independent of the network size $M$.
\end{remark}

\section{Results}\label{sec:results}
This section presents a numerical example. 
Source code and simulation results can be found at \url{https://github.com/SamuelMallick/dmpcrl_so}.
We consider a three-agent system with state coupling in a chain, i.e., $1 \leftrightarrow 2 \leftrightarrow 3$, with the real (unknown) dynamics 
\begin{equation}
	s_i(t+1) = A_i s_i(t)+ \sum_{j \in \mathcal{N}_i} A_{ij} s_j(t) + B_i a_i(t)  + [e_i(t), 0]^\top
\end{equation}
where
\begin{equation} \label{eq:true_mod}
	A_i = \begin{bmatrix}
		0.9 & 0.35 \\ 0 & 1.1
	\end{bmatrix}, \{A_{ij}\}_{j \in \mathcal{N}_i} = \begin{bmatrix}
		0 & 0 \\ 0 & -0.1
	\end{bmatrix}, B = \begin{bmatrix}
	0.0813 \\ 0.2
	\end{bmatrix},
\end{equation}
and $e_i(t)$ uniformly distributed over the interval $[-0.1, 0]$.
The task is to drive the states towards the origin while avoiding violations of the constraints $\underline{s} \leq s_i \leq \overline{s}$, with $\underline{s} = [0, -1]^\top$ and $\overline{s} = [1, 1]^\top$.
Local costs are
\begin{equation}
	\begin{aligned}
		&L_{i}(s_i, a_i) = \|s_i\|_2^2 + \frac{1}{2}\|a_i\|_2^2 \\
		&{} \: + \max\big\{0, \omega^\top (\underline{s} - s_i)\big\} + \max\big\{0, \omega^\top (s_i - \overline{s})\big\},
	\end{aligned} 
\end{equation}
with $\omega = [500, 500]^\top$.
Non-positive noise on the first state biases that term towards violating the lower bound of zero.
Hence, agents must learn to regulate the state close to zero while accounting for the noise.
The true model is unknown, with agents instead knowing the incorrect initial model
\begin{equation}
	\label{eq:inac_mod}
	\begin{aligned}
		&\Hat{A}_i = \begin{bmatrix}
			a_{i,11} & a_{i, 12}\\ 0 & a_{i, 22}
		\end{bmatrix} = \begin{bmatrix}
			1 & 0.25\\ 0 & 1
		\end{bmatrix}, \\
		 &\Hat{A}_{ij} = \begin{bmatrix}
			0 & 0 \\ 0 & a_{ij}
		\end{bmatrix} = \begin{bmatrix}
		0 & 0 \\ 0 & 0
		\end{bmatrix}, \: j \in \mathcal{N}_i \\
		&\Hat{B}_i = \begin{bmatrix}
			b_{i,1} \\ b_{i,2}
		\end{bmatrix} = \begin{bmatrix}
		0.0312 \\ 0.25
		\end{bmatrix}.
	\end{aligned}
\end{equation}
We implement the following distributed MPC scheme:
\begin{equation*}
	\begin{aligned}\label{eq:academic_dmpc}
		\min_{\substack{\{(\bm{x}_i, \bm{u}_i, \bm{\sigma}_i)\}_{i \in \mathcal{M}}}} &\sum_{i \in \mathcal{M}}\Bigg( V_{i, 0} + \sum_{k \in \mathcal{K}} f_i^\top \begin{bmatrix}
			x_i(k) \\ u_i(k)
		\end{bmatrix} \\
		&+ \frac{1}{2}\gamma^k \Big( \big(Q^\top_i x_i(k)\big)^2 + \big(R^\top_i u_i(k)\big)^2 \\
		&+ \omega_i^\top \sigma_i(k) \Big) \Bigg)\\
		\text{s.t.}& \quad \forall  i \in \mathcal{M}, \: \forall k \in \mathcal{K}: \nonumber \\
		&x_i(k+1) = \Hat{A}_i x_i(k) + \Hat{B}_i u_i(k) \\
		&\quad\quad+ \sum_{j \in \mathcal{N}_i} \Hat{A}_{ij} x_j(k) + b_i\\
		&\underline{s} + \underline{x}_i - \sigma_i(k) \leq x_i(k) \leq \overline{s} + \overline{x}_i + \sigma_i(k)\\
		&-1 \leq u_i(k) \leq 1, \: x_i(0) = s_i
	\end{aligned}
\end{equation*}
where $M = 3$, $\mathcal{K} = \{0,\dots,9\}$, and $\gamma = 0.9$.
The learnable parameters for each agent are then
\begin{equation}
	\begin{aligned}
		\theta_i = &(V_{i,0}, \underline{x}_i, \overline{x}_i, b_i, f_i, Q_i, R_i, \omega_i, a_{i,11}, a_{i,12}, \\ &\quad \quad a_{i,22}, \{a_{ij}\}_{j \in \mathcal{N}_i}, b_{i, 1}, b_{i, 2}).
	\end{aligned}
\end{equation}
Note that, to ensure convexity, a second-order parameterization is used with the cost quadratic in $Q_i$ and $R_i$.
The learnable model parameters are initialized as the inaccurate model \eqref{eq:inac_mod}, the cost matrices as $Q_i = \bm{1} \in \mathbb{R}^{n}$ and $R = \frac{1}{2}\bm{1} \in \mathbb{R}^m$, and the local penalty $\omega_i = \omega$.
All other learnable parameters are initialized to zero.

We now compare the distributed second-order approach proposed in this paper (D-SO) against the first-order version (D-FO) \citep{mallick2024multi}, where the first-order update \eqref{eq:fo_local} is used, and the centralized equivalent with second-order updates (C-SO) \citep{gros_data-driven_2020}, where the global MPC problem for the entire network is solved in a centralized manner, and the global update \eqref{eq:global_update} is used. 
For each approach, $\alpha$ is chosen as the highest performing in the set $\{10^{-4}, 10^{-5}, \dots, 10^{-10}\}$, with $\alpha = 10^{-4}$ chosen for both second-order methods, and $\alpha=10^{-8}$ chosen for the first-order method.
For all approaches $T = 15$ and $\mathcal{T}$ is sampled uniformly from a replay buffer of the $100$ most recent transitions.
For regularization, $\sigma_i$ is chosen to render $T\sigma_i - \bm{K}$ non-singular, with the invertibility of $\bm{I}+\bm{G}$ checked before computing \eqref{eq:local_so_update} in the distributed case.
For the distributed approach we use 100 iterations in both ADMM and GAC.
Figure \ref{fig:td_r} shows the distribution of the global TD error and the collective cost across five training instances, to account for randomness.
Figure \ref{fig:states} shows the state and input trajectories across one representative instance.
It can be seen that the behavior and the learning outcome of the distributed second-order approach is similar to that of the centralized second-order approach. 
Agents learn to regulate the state as close to the origin as possible without incurring expensive constraint violations.
Minor differences can be explained by small errors in the primal and dual values from solving the global MPC problem with ADMM that propagate through the parameter updates.
Both second-order approaches significantly outperform the first-order approach, which fails to make significant learning progress.
\begin{figure}
	\centering
	\input{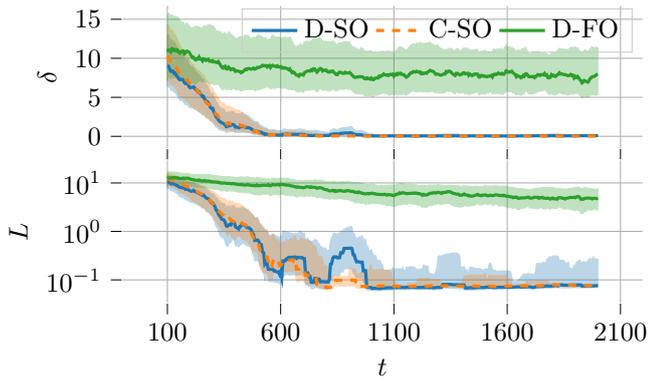}
	\caption{Moving average (100 steps) of TD error (top) and global stage cost (bottom, log scale). Five training instances are shown, with solid lines the median and shaded areas the interval between the 32nd and 68th percentiles.}
	\label{fig:td_r}
\end{figure}
\begin{figure}
	\centering
	\input{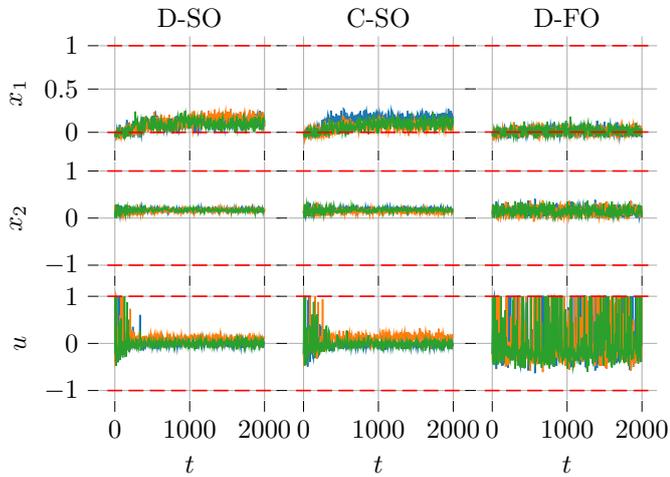}
	\caption{State and action trajectories of agents during a learning instance.}
	\label{fig:states}
\end{figure}

\section{Conclusions}\label{sec:conclusions}
This work has extended the MPC-based distributed Q-learning paradigm to allow for second-order updates, thereby improving learning performance.
In particular, it is shown that leveraging consensus on certain gradient information (the matrix $\bm{C}$) allows agents to reconstruct an update that is equivalent to the corresponding portion of the global learning update.
In simulation, the approach is shown to perform comparatively to centralized second-order MPC-based Q-learning, while being fully distributed, and to outperform distributed first-order MPC-based Q-learning.
 
Future work will look at extending this methodology to the class of policy-based learning algorithms, e.g., policy gradient.

\bibliography{ifacconf}             
                                                  
\end{document}